\documentclass[12pt]{article}

\usepackage{amssymb}
\usepackage{amsthm,amsmath}
\usepackage{graphics,verbatim,color}
\usepackage[pdftex]{graphicx}
\usepackage{epstopdf}
\usepackage[ruled,lined,linesnumbered]{algorithm2e}
\usepackage{threeparttable}
\usepackage{epsfig,fancyhdr,times,hyperref}

\newtheorem{Theor}{Theorem}[section]
\newtheorem{Lemma}{Lemma}[section]

\newtheorem{Propo}{Proposition}[section]
\newtheorem{Remar}{Remark}[section]
\newtheorem{Defin}{Definition}[section]

\newtheorem{Examp}{Example}[section]

\def\recount{{\setcounter{equation}{0}}}
\def\recoup{{\setcounter{subsection}{0}}}

\textwidth 5.7 in
\textheight 8.6 in
\pagestyle{fancy}
\fancyhead{}
\fancyhead[LO]{C. Buescu, M. Taksar and F.J. Kon\'{e}}
\fancyhead[RO]{Volatility estimation using high-low-open-close}

\topmargin -0.25in
\oddsidemargin 1.2 cm
\evensidemargin 1.2 cm
\begin{document}
\title{An application of the method of moments to volatility estimation 
using daily high, low, opening and closing prices
}


\maketitle





\bigskip

\begin{center}
\noindent{\bf Cristin Buescu}

Department of Mathematics, King's College, London

\medskip

\noindent{\bf Michael Taksar}\footnote{This work was supported by 
the Norwegian Research Council: Forskerprosjekt ES445026, ``Stochastic
Dynamics of Financial Markets."}\\
Mathematics Department, University of Missouri

\medskip

\noindent{\bf  Fatoumata J. Kon\'{e}}\\
Citibank, London

\end{center}





\medskip

\begin{center}
 {\bf Abstract}
\end{center}

We use the 
expectation of the range of an arithmetic Brownian
motion and the method of moments on the daily high, low, opening and closing
prices to estimate the 
volatility of the stock price. 
The daily price jump at the opening is 
considered to be the result of the unobserved evolution of an after-hours 
virtual trading day.
The annualized volatility is used to calculate Black-Scholes prices for European options,
and a trading strategy is devised to profit when these prices differ flagrantly from
the market prices.

\bigskip

\noindent{\bf Key words}: Range-based volatility estimation, method of moments, 
daily high, low, opening and closing prices, density and expectation of the 
range of an arithmetic Brownian motion.

%

\bigskip



%
%

%
%






%
%

{\bf AMS subject classifications:} 91G20, 60J65, 62F10, 62G05, 62P05
%
%
%

{\bf JEL classification:}  G12, G13, C13, C46.


\noindent

\renewcommand{\theequation}{\arabic{section}.\arabic{equation}}


\section{Introduction}
\recount
\recoup
\setcounter{equation}{0}


This article is a modified version of what has been studied in the Ph.D. thesis
of Kon\'{e} (1996). It concerns the application of the method of moments to 
range-based volatility estimation using daily high, low, opening and closing 
stock prices. 
Aiming to estimate volatility and not to measure it, we assume a Black-Scholes
framework with constant volatility and use daily data to achieve it (see 
Rogers and Zhou (2008) for further motivation for this choice).

Incidental to this is the derivation of the density and expectation of the
range of an arithmetic Brownian motion. Subsequent to this thesis, portions of 
it have been studied for different purposes (see, for instance, Sutrick et al
(1997) for the use of the density of the range of an arithmetic Brownian motion
in the {\em do-nothing option}, or Magdon-Ismail et al (2000, 2004) for the use of 
the expectation of the range of an arithmetic Brownian motion in different
contexts). In particular, expressing the density of the range in the context
of Sutrick et al (1997) corrects their expression.

The literature on range-based volatility estimation includes classic work by
Garman and Klass (1980), Parkinson (1980), Rogers and Satchell (1991) and 
Rogers et al (1994), whose estimators are reviewed in Yang and Zhang (2000). 
Of these, the latter paper is most related to the current one because it
considers after-hours price jumps in addition to drift. However, the methods 
presented here are different\footnote{see
Remarks \ref{important} and \ref{comparison}}, 
and perhaps more practical (see the remarks of 
Chan and Lien (2003) on the empirical availability of certain parameters in 
Yang and Zhang (2000)).

Starting from the joint density of the running maximum and the current value of
an arithmetic Brownian motion, the density of their difference (referred to as
half-range) is obtained, and its expectation computed. This allows the 
computation of the expectation of the full range (defined as maximum minus 
minimum), which will then be used in the method of moments for intra-day
volatility estimation.

After-hours arrival of information results in price jumps at the opening, and
we model this as a virtual trading day which is unobservable, but which,
when succeeding the trading day, gives on average the complete statistical
representation of one day.

Black-Scholes option prices are computed using the parameter estimates, and
when they differ the most from the observed market prices a profit is made by
an appropriately devised trading strategy.


The paper is organized as follows. In Section 2 we derive the expectation and
the density function of an arithmetic Brownian motion. In Section 3 the
method of moments is used to estimate the parameters of the stock price based on
daily high, low, opening and closing data. The estimated parameters are then used in Section 4
to price European options on the stock, which are then compared to market
prices to identify instances of flagrant differences.
The effect of the mispricing is estimated by computing the 
profit to be made in these opportunities. We conclude in Section 5 with 
some comments on the efficiency of the method of moments in volatility 
estimation.

\section{The range of an arithmetic Brownian motion: expectation and density }
\recount
\recoup
Let $\{\Omega,{\cal F},P\}$ be a probability space 
endowed with a filtration $\{{\cal F}_{t}\}_{t\geq 0}$,
and let $\{W_{t}\}_{t\geq 0}$ be a one-dimensional standard Brownian motion
adapted to $\{{\cal F}_{t}\}_{t\geq 0}$. 
For $t\geq 0$ let $X_t$ denote a standard arithmetic Brownian motion with 
drift $\mu$ and volatility $\sigma>0$:
\begin{equation}
X_t =  \mu \;t +\sigma \;W_t ,\;\;X_0=0,
          \label{equation1}
\end{equation}
and let $\{M_{t}\}_{t\geq 0}$, $\{m_{t}\}_{t\geq 0}$ and $\{R_{t}\}_{t\geq 0}$
denote its running maximum, minimum, and range, respectively:
\begin{equation}
M_t:=\sup_{0\leq s\leq t}{X_s},\;\;m_t=\inf_{0\leq s\leq t}{X_s},\;\;
R_t:=M_t-m_t.
\label{equation2}
\end{equation}

\medskip

First we derive the expectation $E[R_t]$ of the range of the arithmetic Brownian
motion $X_t$. This is achieved by computing the density and expectation of the
half-range $M_t-X_t$ from the joint density of $X_t$ and $M_t$.
 
\begin{Lemma} \label{lemma1}
The joint density function of an arithmetic Brownian motion and its
running maximum can be expressed as:
\begin{equation}
P(X_t\in da,M_t\in db)=\frac{2(2b-a)}{\sqrt{2\pi t^3}\sigma^3}\exp{
\left\{-\frac{(2b-a)^2}{2t\sigma^2}+\frac{\mu}{\sigma^2}a-\frac 12
\frac{\mu^2}{\sigma^2}t\right\}}\;da\;db. \label{jointmM}
\end{equation}
\end{Lemma}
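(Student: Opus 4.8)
The plan is to derive the joint density \eqref{jointmM} from the classical reflection principle for standard Brownian motion, then transfer it to the drifted process $X_t$ by a Girsanov change of measure. First I would treat the driftless case $\mu=0$: writing $X_t=\sigma W_t$, the running maximum is $M_t=\sigma\sup_{0\le s\le t}W_s$. For a standard Brownian motion $W$, the reflection principle gives $P(\sup_{s\le t}W_s\ge \beta,\;W_t\le \alpha)=P(W_t\ge 2\beta-\alpha)$ for $\beta\ge 0$, $\alpha\le \beta$. Differentiating $P(\sup_{s\le t}W_s\ge \beta,\;W_t\in d\alpha)$ once in $\beta$ and once in $\alpha$ yields the well-known joint density
\[
P\!\left(W_t\in d\alpha,\ \sup_{0\le s\le t}W_s\in d\beta\right)=\frac{2(2\beta-\alpha)}{\sqrt{2\pi t^3}}\,\exp\!\left\{-\frac{(2\beta-\alpha)^2}{2t}\right\}\,d\alpha\,d\beta,\qquad \beta\ge 0,\ \alpha\le\beta.
\]
Rescaling by $\sigma$ (so $a=\sigma\alpha$, $b=\sigma\beta$, with Jacobian $1/\sigma^2$) gives \eqref{jointmM} in the case $\mu=0$, i.e. with the exponential replaced by $\exp\{-(2b-a)^2/(2t\sigma^2)\}$ and an extra $1/\sigma^2$ from the change of variables, matching the $\sigma^3$ in the denominator.

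Next I would reintroduce the drift via Girsanov's theorem. Under the measure $Q$ defined by the Radon--Nikodym derivative $\frac{dP}{dQ}\big|_{{\cal F}_t}=\exp\{\frac{\mu}{\sigma^2}X_t-\frac12\frac{\mu^2}{\sigma^2}t\}$ (equivalently, $\frac{dQ}{dP}\big|_{{\cal F}_t}=\exp\{-\frac{\mu}{\sigma}W_t-\frac12\frac{\mu^2}{\sigma^2}t\}$), the process $X_t=\mu t+\sigma W_t$ is a driftless $Q$-martingale with the same volatility $\sigma$, so its joint law with $M_t$ under $Q$ is the $\mu=0$ density computed above. Then for any bounded test function $f$,
\[
E_P[f(X_t,M_t)]=E_Q\!\left[f(X_t,M_t)\,\exp\!\left\{\tfrac{\mu}{\sigma^2}X_t-\tfrac12\tfrac{\mu^2}{\sigma^2}t\right\}\right],
\]
and since the Radon--Nikodym density is a function of $X_t$ alone (not of $M_t$), it simply multiplies the $Q$-density pointwise. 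This produces exactly the extra factor $\exp\{\frac{\mu}{\sigma^2}a-\frac12\frac{\mu^2}{\sigma^2}t\}$ in \eqref{jointmM}, completing the identification.

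The main technical point to be careful about is the reflection-principle computation in the driftless case: one must justify differentiating the tail probability in both $\alpha$ and $\beta$, keep track of the constraint region $\{b\ge 0,\ a\le b\}$ (which is implicit in the formula since the density is only supported there), and handle the sign correctly when $\beta-\alpha$ appears. An alternative to the reflection-principle derivation is to quote the standard result directly (it appears in, e.g., Karatzas--Shreve or Borodin--Salminen) and then only carry out the scaling and Girsanov steps; this is the cleaner route and the one I would actually write up, since the reflection principle for standard Brownian motion is entirely classical. The Girsanov step itself is routine once one observes the key structural fact that $dP/dQ$ depends only on $X_t$, so I anticipate no real obstacle there.
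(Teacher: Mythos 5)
Your proposal is correct and follows essentially the same route as the paper: both arguments combine the reflection principle for the driftless joint law of the Brownian motion and its maximum with a Girsanov change of measure, exploiting the fact that the Radon--Nikodym derivative $\exp\{\frac{\mu}{\sigma^2}X_t-\frac12\frac{\mu^2}{\sigma^2}t\}$ depends only on $X_t$ and therefore multiplies the driftless density pointwise. The only difference is cosmetic (you compute the driftless density first and then tilt, whereas the paper changes measure first and then differentiates the reflected cdf), so no further changes are needed.
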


\begin{proof} The proof is standard. 
Using the martingale $Z_t(Y)=\exp\{W_t\mu/\sigma-t\mu^2/(2\sigma^2)\}$, 
Girsanov's change of measure defines a new probability measure $\tilde{P}$ for
any measurable set $A$ by $\tilde{P}(A)=E[Z_t(Y)1_A]$.
Theorem 3.2.2 of Karatzas and Shreve (1988) with $Y_t=\mu/\sigma$ and
$N_t=\sigma W_t$ gives that $\tilde{N}_t=\sigma W_t-\mu t$ is a local
martingale.
The process $\tilde{W}_t=W_t-t\mu/\sigma$ is 
a Brownian motion under the new probability measure $\tilde{P}$. Equivalently, 
$\sigma W_t=\mu t+\sigma \tilde{W}_t$ is a Brownian
motion with drift under $\tilde{P}$, and we can write:
\begin{eqnarray}
P(X_t\leq a, {M}_t\leq b)&=&\tilde{P}(\sigma W_t\leq a, \sigma \bar{M}_t\leq b)
\nonumber\\
&=&\int_{-\infty}^{a} \exp\Big{(}\frac{\mu}{\sigma^2}x
-\frac 12 \frac{\mu^2}{\sigma^2}t\Big{)}
\;P(\sigma W_t\in dx, \sigma \bar{M}_t\leq b).\label{jtildeP}
\end{eqnarray}

An application of the reflection principle gives:
\begin{eqnarray*}
P(\sigma W_t\leq a, \sigma \bar{M}_t\leq b)&=&
P\Big{(}W_t\leq \frac{a}{\sigma}\Big{)}
-P\Big{(}W_t\leq \frac{a}{\sigma},\sigma \bar{M}_t> b\Big{)}\\
&=&P\Big{(}W_t\leq \frac{a}{\sigma}\Big{)}-P\Big{(}W_t>\frac{2b-a}{\sigma}\Big{)}
\\
&=&\Phi\Big{(}\frac{a}{\sigma\sqrt{t}}\Big{)}-1+
\Phi\Big{(}\frac{2b-a}{\sigma\sqrt{t}}\Big{)},
\end{eqnarray*}
where we denote by $\phi(\cdot)$ and $\Phi(\cdot)$ the standard normal density 
and cumulative distribution functions, respectively. 

Differentiating the formula above with respect to $a$ gives:
\begin{equation}
P(\sigma W_t\in da, \sigma \bar{M}_t\leq b )=\frac{1}{\sigma \sqrt{t}}
\Bigg{(}\phi\Big{(}\frac{a}{\sigma\sqrt{t}}\Big{)}-
\phi\Big{(}\frac{2b-a}{\sigma\sqrt{t}}\Big{)}\Bigg{)}.\label{jP}
\end{equation}

Replacing (\ref{jP}) in (\ref{jtildeP}) and differentiating first with respect to 
$a$ gives:
\begin{eqnarray*}
P(X_t\in da, M_t\leq b)=\frac{1}{\sigma\sqrt{t}}
\exp\Big{(}\frac{\mu}{\sigma^2}a-\frac 12 \frac{\mu^2}{\sigma^2}t\Big{)}
\Bigg{(}\phi\Big{(}\frac{a}{\sigma\sqrt{t}}\Big{)}-
\phi\Big{(}\frac{2b-a}{\sigma\sqrt{t}}\Big{)}\Bigg{)}da,
\end{eqnarray*}
and differentiating then with respect to $b$ gives:
\begin{eqnarray*}
P(X_t\in da, M_t \in db)=\frac{1}{\sigma\sqrt{t}}
\exp\Big{(}\frac{\mu}{\sigma^2}a-\frac 12 \frac{\mu^2}{\sigma^2}t\Big{)}
\Big{(}-\frac{2}{\sigma\sqrt{t}}\Big{)}\phi'\Big{(}\frac{2b-a}{\sigma\sqrt{t}}
\Big{)}\;da\;db\\
=\Big{(}-\frac{2}{t\sigma^2}\Big{)}\exp\Big{(}\frac{\mu}{\sigma^2}a-
\frac 12 \frac{\mu^2}{\sigma^2}t\Big{)}\frac{1}{\sqrt{2\pi}}
\exp\Big{(}-\frac 12\Big{(}\frac{2b-a}{\sigma\sqrt{t}}\Big{)}^2\Big{)}
\Big{(}-\frac 12\Big{)}2\frac{2b-a}{\sigma\sqrt{t}}\;da\;db\\
=\frac{2(2b-a)}{\sqrt{2\pi t^3}\sigma^3}\exp{
\left\{-\frac{(2b-a)^2}{2t\sigma^2}+\frac{\mu}{\sigma^2}a-\frac 12
\frac{\mu^2}{\sigma^2}t\right\}}\;da\;db.
\end{eqnarray*}
The joint density $f_{X_t,M_t}(a,b)$ is given by the term multiplying $da\;db$ 
above.
\end{proof}

\begin{Remar}
This is one of those results that seemed to be always at hand (it can be
obtained from equation (1.8.8) of Harrison (1985)), but never derived.
Note the typo in Yang and Zhang (2000), whose expression (B1) has a plus for the
first fraction in the exponential.  For $\sigma=1$ this was used in Example E5
of Karatzas and Shreve (1998) in relationship to Clark's formula to obtain
explicitly the hedging portfolio.
\end{Remar}

\medskip

The density of the half-range $M_t-X_t$ can be obtained using a standard
two-dimensional transformation of the above joint density.

\begin{Lemma} \label{lemma2}
The density of the half-range $M_t-X_t$ is given by:
\begin{eqnarray}
f_{M_t-X_t}(c)=2\frac{\mu}{\sigma^2}\;\Phi\Big{(}
\frac{\mu t-c}{\sigma\sqrt{t}}\Big{)}
\exp\Big{(}-2\frac{\mu}{\sigma^2}c\Big{)}
+\frac{2}{\sigma\sqrt{2t\pi}}
\exp\left\{-\frac{(\mu t +c)^2}{2t\sigma^2}\right\}.
\end{eqnarray}
\end{Lemma}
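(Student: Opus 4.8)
The plan is to obtain $f_{M_t-X_t}$ as a marginal of a transformed version of the joint law in Lemma \ref{lemma1}. Introduce $c:=b-a$ for the half‑range and keep $a$ as the second coordinate; the map $(a,b)\mapsto(a,c)$ has unit Jacobian, and since $M_t\ge X_t$ and $M_t\ge X_0=0$ the support of $(X_t,M_t-X_t)$ is $\{c\ge 0,\ a\ge -c\}$. Using $2b-a=a+2c$, the half‑range density is
\[
f_{M_t-X_t}(c)=\int_{-c}^{\infty}\frac{2(a+2c)}{\sqrt{2\pi t^3}\,\sigma^3}\exp\left\{-\frac{(a+2c)^2}{2t\sigma^2}+\frac{\mu}{\sigma^2}a-\frac12\frac{\mu^2}{\sigma^2}t\right\}da .
\]

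Next I would substitute $u=a+2c$, so $a=u-2c$ and $u$ ranges over $[c,\infty)$. Pulling the $u$‑independent factor $e^{-2\mu c/\sigma^2-\mu^2 t/(2\sigma^2)}$ out of the exponential and completing the square via $-\dfrac{u^2}{2t\sigma^2}+\dfrac{\mu u}{\sigma^2}=-\dfrac{(u-\mu t)^2}{2t\sigma^2}+\dfrac{\mu^2 t}{2\sigma^2}$, the extra $+\mu^2 t/(2\sigma^2)$ cancels the $-\mu^2 t/(2\sigma^2)$ already present, leaving
\[
f_{M_t-X_t}(c)=\frac{2}{\sqrt{2\pi t^3}\,\sigma^3}\,e^{-2\mu c/\sigma^2}\int_c^{\infty}u\,e^{-(u-\mu t)^2/(2t\sigma^2)}\,du .
\]

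The key computational step is to write $u=(u-\mu t)+\mu t$ inside this last integral. The $(u-\mu t)$ piece has the explicit antiderivative $-t\sigma^2 e^{-(u-\mu t)^2/(2t\sigma^2)}$ and contributes $t\sigma^2 e^{-(c-\mu t)^2/(2t\sigma^2)}$; the $\mu t$ piece is a Gaussian tail and contributes $\mu t\,\sigma\sqrt{2\pi t}\,\Phi\!\big((\mu t-c)/(\sigma\sqrt t)\big)$. Substituting back and simplifying the constants (in particular $t\sqrt{2\pi t}/\sqrt{2\pi t^3}=1$ and $\tfrac{2}{\sqrt{2\pi t^3}\sigma^3}\,t\sigma^2=\tfrac{2}{\sigma\sqrt{2t\pi}}$), the $\mu t$ piece is already the first summand $\tfrac{2\mu}{\sigma^2}\Phi\!\big((\mu t-c)/(\sigma\sqrt t)\big)e^{-2\mu c/\sigma^2}$ of the claimed formula, and for the $(u-\mu t)$ piece one performs a second completion of the square, $-\dfrac{(c-\mu t)^2}{2t\sigma^2}-\dfrac{2\mu c}{\sigma^2}=-\dfrac{(c+\mu t)^2}{2t\sigma^2}$, turning it into $\tfrac{2}{\sigma\sqrt{2t\pi}}\exp\{-(\mu t+c)^2/(2t\sigma^2)\}$, which is the second summand. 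Adding the two pieces gives the statement. The only subtle points are keeping the correct lower limit of integration ($-c$, forced by $M_t\ge 0$, not $-\infty$) and tracking signs through the two square completions; this is the main obstacle only in the bookkeeping sense, as the argument is conceptually a routine marginalization once the change of variables is set up.
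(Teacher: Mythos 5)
Your argument is correct and is essentially the paper's own proof: both marginalize a linear change of variables of the Lemma \ref{lemma1} joint density over the half-line forced by $M_t\ge 0$, then evaluate the resulting Gaussian integral by splitting $u=(u-\mu t)+\mu t$ and completing the square. The only difference is cosmetic --- you keep $(X_t,\,M_t-X_t)$ as coordinates (unit Jacobian), while the paper uses the sum/difference pair $(M_t+X_t,\,M_t-X_t)$ with Jacobian $\tfrac12$; all substantive steps, including the lower limit of integration and the two square completions, coincide.
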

\begin{proof}
For $Y_1=X_1+X_2$ and $Y_2=X_1-X_2$ 
the joint density is
$
f_{Y_1,Y_2}(y_1,y_2)=\frac 12 f_{X_1,X_2}\Big{(}\frac{y_1+y_2}{2},\frac{y_1-y_2}{2}\Big{)}.
$
Taking $X_1=M_t$ and $X_2=X_t$ and using Lemma \ref{lemma1} gives: 
\begin{eqnarray*}
f_{Y_1,Y_2}(y_1,y_2)=\frac 12 f_{M_t,X_t}\Big{(}\frac{y_1+y_2}{2},\frac{y_1-y_2}{2}\Big{)}
=\frac 12 \;f_{X_t,M_t}\Big{(}\frac{y_1-y_2}{2},\frac{y_1+y_2}{2}\Big{)}\\
=\frac 12 \;\frac{2y_1+2y_2-y_1+y_2}{\sqrt{2\pi t^3}\sigma^3}
\exp{\left\{-\frac{(2y_1+2y_2-y_1+y_2)^2}{8t\sigma^2}+\frac{\mu}{\sigma^2}
\frac{y_1-y_2}{2}-\frac 12 \frac{\mu^2}{\sigma^2}t\right\}}\\
=\frac 12 \;\frac{y_1+3y_2}{\sqrt{2\pi t^3}\sigma^3}
\exp{\left\{-\frac{(y_1+3y_2)^2}{8t\sigma^2}+\frac{\mu}{\sigma^2}
\frac{y_1-y_2}{2}-\frac 12 \frac{\mu^2}{\sigma^2}t\right\}}\\
=\frac 12 \;\frac{y_1+3y_2}{\sqrt{2\pi t^3}\sigma^3}
\exp{\left\{-\frac 12 \frac{(y_1+3y_2-2\mu t)^2}{4t\sigma^2}
-2 \frac{\mu}{\sigma^2}y_2\right\}}
\end{eqnarray*}
Note that $M_t\geq 0$ implies 
$Y_1\geq -Y_2$, thus the marginal density of $Y_2=M_t-X_t$ is:
\begin{eqnarray*}
f_{Y_2}(y_2)=\int_{-y_2}^{\infty}f_{Y_1,Y_2}(y_1,y_2)dy_1.
\end{eqnarray*}

A change of variable $z=(y_1+3y_2-2\mu t)/(2\sigma\sqrt{t})$ gives: 
\[
z>z_0:=\frac{y_2-\mu t}{\sigma\sqrt{t}},\;\;
dy_1=2\sigma\sqrt{t}dz,
\]
therefore:
\begin{eqnarray*}
&&f_{Y_2}(y_2)=\int_{z_0}^{\infty}\frac 12 \frac{2z\sigma\sqrt{t}+2\mu t}
{\sqrt{2\pi t^3}\sigma^3}\exp{\left\{-\frac 12 {z^2}
-2 \frac{\mu}{\sigma^2}y_2\right\}}2\sigma\sqrt{t}dz\\
&=&\frac{2}{\sigma\sqrt{t}}\int_{z_0}^{\infty}z\frac{1}{\sqrt{2\pi}}e^{-\frac{z^2}{2}}dz
\exp\left\{-2\frac{\mu}{\sigma^2}y_2\right\}+2\frac{\mu}{\sigma^2}\int_{z_0}^\infty 
\frac{1}{\sqrt{2\pi}}e^{-\frac{z^2}{2}}dz\exp\left\{-2\frac{\mu}{\sigma^2}y_2\right\}\\
&=&\frac{2}{\sigma\sqrt{t}}\int_{z_0}^{\infty}-\Big{(}\frac{1}{\sqrt{2\pi}}e^{-\frac{z^2}{2}}\Big{)}'dz
\exp\left\{-2\frac{\mu}{\sigma^2}y_2\right\}+2\frac{\mu}{\sigma^2}\Big{(}1-\Phi(z_0)\Big{)}
\exp\left\{-2\frac{\mu}{\sigma^2}y_2\right\}\\
&=&\frac{2}{\sigma\sqrt{t}}\frac{1}{\sqrt{2\pi}}e^{-\frac{z_0^2}{2}}
\exp\left\{-2\frac{\mu}{\sigma^2}y_2\right\}+2\frac{\mu}{\sigma^2}\Phi(-z_0)
\exp\left\{-2\frac{\mu}{\sigma^2}y_2\right\}.
\end{eqnarray*}
This can be rewritten as:
\begin{eqnarray}
P(M_t-X_t\in dc)&=&\Bigg{(}2\frac{\mu}{\sigma^2}\Phi\Big{(}
\frac{\mu t-c}{\sigma\sqrt{t}}\Big{)}
\exp\Big{(}-2\frac{\mu}{\sigma^2}c\Big{)}\nonumber\\
&+&\frac{2}{\sigma\sqrt{2t\pi}}
\exp\left\{-\frac{(\mu t -c)^2}{2t\sigma^2}\right\}
\exp\Big{(}-2\frac{\mu}{\sigma^2}c\Big{)}
\Bigg{)}dc, \label{halfdensity}
\end{eqnarray}
and the result follows.
\end{proof}

\begin{Propo} \label{propo1} 
The expectation of the half-range is given by:
\begin{eqnarray}
E(M_t-X_t)&=&\frac{\sigma^2}{2\mu}\;\Phi\Big{(}\frac{\mu}{\sigma} \sqrt{t}\Big{)}
-\Big{(}\mu t+\frac{\sigma^2}{2\mu} \Big{)}\Big{(}1-\Phi\Big{(}\frac \mu \sigma
 \sqrt(t)\Big{)}\Big{)}\nonumber\\
&+&\frac{\sigma\sqrt{t}}{\sqrt{2\pi}}
\exp{\Big{(}-\frac{t \mu^2}{2\sigma^2}\Big{)}}.  \label{halfexpectation}
\end{eqnarray}
\end{Propo}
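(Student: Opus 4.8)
\noindent\emph{Proof strategy.}\ The plan is to compute $E(M_t-X_t)=\int_0^\infty c\,f_{M_t-X_t}(c)\,dc$ directly from the half-range density of Lemma~\ref{lemma2}. Using \eqref{halfdensity}, split the integral as $E(M_t-X_t)=I_1+I_2$, where
\[
I_1=\frac{2\mu}{\sigma^2}\int_0^\infty c\,\Phi\Big(\tfrac{\mu t-c}{\sigma\sqrt t}\Big)\exp\Big(-\tfrac{2\mu}{\sigma^2}c\Big)\,dc,
\qquad
I_2=\frac{2}{\sigma\sqrt{2\pi t}}\int_0^\infty c\,\exp\Big\{-\tfrac{(\mu t+c)^2}{2t\sigma^2}\Big\}\,dc .
\]

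For $I_2$ I would substitute $v=\mu t+c$ and write $c=v-\mu t$. The contribution of $v$ is elementary, since $v\exp\{-v^2/(2t\sigma^2)\}$ is a constant multiple of the derivative of $\exp\{-v^2/(2t\sigma^2)\}$, and produces a term of the form $\tfrac{\sigma\sqrt t}{\sqrt{2\pi}}\exp\{-t\mu^2/(2\sigma^2)\}$; the contribution of $-\mu t$ is $-\mu t$ times a Gaussian tail integral, i.e.\ a multiple of $1-\Phi(\mu\sqrt t/\sigma)$.

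For $I_1$ I would integrate by parts, taking $c\exp\{-\tfrac{2\mu}{\sigma^2}c\}$ as the factor to integrate — its antiderivative is $-\tfrac{\sigma^2}{2\mu}\big(c+\tfrac{\sigma^2}{2\mu}\big)\exp\{-\tfrac{2\mu}{\sigma^2}c\}$ — and $\Phi\big(\tfrac{\mu t-c}{\sigma\sqrt t}\big)$ as the factor to differentiate, whose derivative is $-\tfrac{1}{\sigma\sqrt t}\phi\big(\tfrac{\mu t-c}{\sigma\sqrt t}\big)$. The boundary term is supported at $c=0$ (the contribution at $+\infty$ vanishes for any $\mu\neq0$) and, after multiplication by the prefactor $\tfrac{2\mu}{\sigma^2}$, yields $\tfrac{\sigma^2}{2\mu}\Phi(\mu\sqrt t/\sigma)$. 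In the remaining integral the product $\exp\{-\tfrac{2\mu}{\sigma^2}c\}\,\phi\big(\tfrac{\mu t-c}{\sigma\sqrt t}\big)$ simplifies: completing the square in the combined exponent, $-\tfrac{2\mu}{\sigma^2}c-\tfrac{(\mu t-c)^2}{2t\sigma^2}=-\tfrac{(c+\mu t)^2}{2t\sigma^2}$, it reduces to $\tfrac{1}{\sqrt{2\pi}}\exp\{-(c+\mu t)^2/(2t\sigma^2)\}$. The leftover integrand is thus a degree-one polynomial in $c$ against this Gaussian, which, after the shift $w=c+\mu t$, again splits into a closed-form Gaussian term and a $1-\Phi(\mu\sqrt t/\sigma)$ term.

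Adding $I_1$ and $I_2$ and collecting the $\Phi$, $1-\Phi$ and exponential pieces should reproduce exactly \eqref{halfexpectation}. I expect the only real difficulty to be the bookkeeping: keeping the $\tfrac{\sigma^2}{2\mu}$ coefficients and the argument $\mu\sqrt t/\sigma$ consistent, so that the two $\tfrac{\sigma\sqrt t}{\sqrt{2\pi}}$ exponential contributions combine into the single stated term and the polynomial-times-$(1-\Phi)$ pieces collapse to $-\big(\mu t+\tfrac{\sigma^2}{2\mu}\big)\big(1-\Phi(\mu\sqrt t/\sigma)\big)$. Two caveats are worth recording: the formula presupposes $\mu\neq0$ (the limit $\mu\to0$ recovers $E(M_t-X_t)\to\sigma\sqrt{2t/\pi}$), and convergence of $I_1$ holds for every $\mu\neq0$ because $\phi\big(\tfrac{\mu t-c}{\sigma\sqrt t}\big)$ decays faster than $\exp\{-\tfrac{2\mu}{\sigma^2}c\}$ can grow when $\mu<0$. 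As an independent check one may use $E(M_t-X_t)=E(M_t)-\mu t$ together with the classical law of the running maximum of Brownian motion with drift.
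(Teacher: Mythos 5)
Your proposal is correct and follows essentially the same route as the paper's proof: split $E(M_t-X_t)$ into the two integrals coming from the two terms of the half-range density, integrate the first by parts using exactly the antiderivative $-\bigl(c+\tfrac{\sigma^2}{2\mu}\bigr)e^{-2\mu c/\sigma^2}$ (up to the prefactor $\tfrac{2\mu}{\sigma^2}$) so that the boundary term yields $\tfrac{\sigma^2}{2\mu}\Phi(\mu\sqrt{t}/\sigma)$, combine the exponents into $-\tfrac{(c+\mu t)^2}{2t\sigma^2}$, and finish with the Gaussian change of variable. The only cosmetic difference is that you evaluate $I_2$ separately while the paper merges it with the leftover integral from the integration by parts before substituting; your completed-square identity and coefficients are the correct ones (the paper's intermediate display actually contains a typo, $-\tfrac{2\sigma^2}{\mu}$ where $-\tfrac{\sigma^2}{2\mu}$ is meant).
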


\begin{proof} 
A simple calculation yields:
\begin{eqnarray*}
E(M_t-X_t)&=&\int_0^\infty c\;\frac{2\mu}{\sigma^2} \;\Phi\Big{(}
\frac{\mu t-c}{\sigma\sqrt{t}}\Big{)}\exp\Big{(}-2\frac{\mu}{\sigma^2}c\Big{)}
dc\\
&&+\int_0^\infty c \frac{2}{\sigma\sqrt{2t\pi}}
\exp\left\{-\frac{(\mu t +c)^2}{2t\sigma^2}\right\}dc\\
&=&\int_0^\infty 
\left\{
-(c+\frac{\sigma^2}{2\mu})
\exp\Big{(}-2\frac{\mu}{\sigma^2}c\Big{)}
\right\}'
\;\Phi\Big{(}\frac{\mu t-c}{\sigma\sqrt{t}}\Big{)}dc\\
&&+\int_0^\infty c \frac{2}{\sigma\sqrt{2t\pi}}
\exp\left\{-\frac{(\mu t +c)^2}{2t\sigma^2}\right\}dc\\
=\frac{\sigma^2}{2\mu}\;\Phi\Big{(}\frac\mu\sigma\sqrt{t}\Big{)}
&+&\frac{1}{\sigma\sqrt{t}}\frac{1}{\sqrt{2\pi}}\int_0^\infty 
\Big{(}-c-\frac{2\sigma^2}{\mu}+2c\Big{)}
\exp\left\{-\frac{(\mu t +c)^2}{2t\sigma^2}\right\}dc.
\end{eqnarray*}
A change of variable $c=z\sigma\sqrt{t}-\mu t$ gives the result.
\end{proof}

\medskip

Consider now $E[X_t-m_t]$. For each path of the Brownian motion $X_t$ 
with drift $\mu$ consider a symmetric path of a Brownian motion $\tilde X_t$ 
having drift $-\mu$. Then $X_t-m_t=-(\tilde X_t-\tilde M_t)$ and $E[X_t-m_t]$ 
can be calculated using the equation (\ref{halfexpectation}) with $\mu$ 
replaced by $-\mu$. Whereas the formula for the expectation of the range 
follows.

\begin{Theor}\label{theor1}
The expectation of the range of the arithmetic Brownian motion $X_t$ defined in
(\ref{equation1}) is given by:
\begin{equation}
E[R_t]=\Big{(}\mu t+\frac{\sigma^2}{\mu}\Big{)}
\Bigg{(}1-2\Phi\Big{(}-\sqrt{t}\frac{\mu}{\sigma}\Big{)}\Bigg{)}
+2\frac{\sigma\sqrt{t}}{\sqrt{2\pi}}
\exp\Big{(}-\frac{t\mu^2}{2\sigma^2}\Big{)}. \label{equation5}
\end{equation}
\end{Theor}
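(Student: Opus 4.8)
The plan is to use the obvious decomposition $R_t = (M_t - X_t) + (X_t - m_t)$ together with linearity of expectation, so that the theorem reduces to computing $E[M_t - X_t]$ and $E[X_t - m_t]$. The first is exactly Proposition~\ref{propo1}; the second I would obtain from it via the path symmetry already indicated in the paragraph preceding the theorem.

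Concretely, I would first make that symmetry rigorous. Setting $\tilde X_t := -X_t$, one sees that $\tilde X_t = (-\mu)t + \sigma(-W_t)$ is an arithmetic Brownian motion with drift $-\mu$ and the same volatility, and its running maximum satisfies $\tilde M_t = \sup_{s\le t}(-X_s) = -\inf_{s\le t}X_s = -m_t$. Hence $\tilde M_t - \tilde X_t = X_t - m_t$ pathwise, and therefore $E[X_t - m_t]$ equals the right-hand side of (\ref{halfexpectation}) with $\mu$ replaced by $-\mu$ throughout, noting that the Gaussian term is unaffected since it depends on $\mu$ only through $\mu^2$.

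Then it is just algebra. Abbreviating $\Phi_+ := \Phi(\tfrac{\mu}{\sigma}\sqrt t)$ and using $\Phi(-x) = 1 - \Phi(x)$, adding $E[M_t - X_t]$ to the $\mu\mapsto-\mu$ version of the same formula, the terms organize so that the entire ``deterministic-times-$\Phi$'' part becomes $(\mu t + \tfrac{\sigma^2}{\mu})(2\Phi_+ - 1)$, while the two Gaussian terms add to $2\tfrac{\sigma\sqrt t}{\sqrt{2\pi}}\exp(-t\mu^2/(2\sigma^2))$. A final rewrite $2\Phi_+ - 1 = 1 - 2\Phi(-\tfrac{\mu}{\sigma}\sqrt t)$ yields (\ref{equation5}).

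I do not expect a genuine obstacle; the only things requiring attention are (i) the identity $\tilde M_t = -m_t$ that legitimizes the symmetry step, and (ii) careful sign-tracking when substituting $-\mu$ into the terms containing $1/\mu$ and $\mu t$ in (\ref{halfexpectation}). One caveat worth flagging: formula (\ref{equation5}) has $\mu$ in a denominator and so does not literally cover $\mu = 0$; that case follows by letting $\mu \to 0$, recovering the classical value $E[R_t] = 2\sigma\sqrt{2t/\pi}$.
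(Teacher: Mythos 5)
Your proposal is correct and follows essentially the same route as the paper: the paper's (implicit) proof is precisely the decomposition $R_t=(M_t-X_t)+(X_t-m_t)$, with the first expectation given by Proposition \ref{propo1} and the second obtained by the reflection $\tilde X=-X$ (drift $-\mu$), followed by the same algebraic recombination. Your added care about the identity $\tilde M_t=-m_t$, the sign-tracking under $\mu\mapsto-\mu$, and the removable singularity at $\mu=0$ only makes explicit what the paper leaves to the reader.
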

Let us denote this
expected range function by $ER(\mu,\sigma,t)$. On closer inspection this can
be further simplified as a function of just two quantities:
\begin{equation}
E[R_t]=ER(\mu,\sigma,t)=h\Big{(}\frac{\mu t}{\sigma \sqrt{t}},
\frac{\sigma^2 }{\mu }\Big{)},\label{er}
\end{equation}
where the function $h$ is defined by:
\begin{equation}
h(x,y):=\left\{(x^2+1)(2\Phi(x)-1)+\frac{2x}{\sqrt{2\pi}}
\exp\Big{(}-\frac{x^2}{2}\Big{)}\right\}y. \label{er2}
\end{equation}
Note that $ER(\mu,\sigma,t)=ER(\mu t,\sigma\sqrt{t},1)$ as it should (the range
over a time interval $(0,t)$ of an arithmetic Brownian motion with parameters
$\mu$ and $\sigma$ is the same as that over $(0,1)$ when the parameters change
to $\mu t$ and $\sigma\sqrt{t}$).

\bigskip

In the remainder of this section we derive the density of the range $R_t$ of
the arithmetic Brownian motion $X_t$. This is achieved by the use of the joint
density of the minimum and the maximum of $X_t$, a result with its own merit,
that we could not find published prior to Kon\'{e} (1996) (Borodin and Salminen
(1996, 1.15.4) published in the same year the joint cumulative distribution 
function only in terms of some definite integrals).

A version of this result is used in Sutrick et al (1997) for the same purpose,
but there it seems to have incorporated an error.

\medskip




To obtain the joint density $F(a,b)$ of the maximum and the minimum we start 
with a lemma.
\begin{Lemma}\label{lemma_main}
We can write:
\begin{eqnarray}
F(a,b)=\int_a^bh(a,b,x)\exp{\Big{(}\frac{\mu}{\sigma^2}x-\frac 12
\frac{\mu^2}{\sigma^2}t\Big{)}}\;dx,\label{aux28}
\end{eqnarray}
where
\begin{eqnarray}
h(a,b,x)&=&h_1(a,b,x)-h_2(a,b,x),\\
h_1(a,b,x)&=&\sum_{k=-\infty}^\infty\frac{2k(2k-2)}{\sigma^3 t\sqrt{2\pi t}}
\Bigg{[}
 1-\frac{[2k(b-a)-2b+x]^2}{t\sigma^2}
\Bigg{]}\nonumber\\
&&\times\exp{\Big{(}-\frac{[2k(b-a)-2b+x]^2}{2t\sigma^2}\Big{)}},
\nonumber\\
h_2(a,b,x)&=&\sum_{k=-\infty}^\infty\frac{4k^2}{\sigma^3 t\sqrt{2\pi t}}
\Bigg{[}
 1-\frac{[2k(b-a)-x]^2}{t\sigma^2}
\Bigg{]}
\exp{\Big{(}-\frac{[2k(b-a)-x]^2}{2t\sigma^2}\Big{)}}.\nonumber
\end{eqnarray}
\end{Lemma}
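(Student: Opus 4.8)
The plan is to recognise $F(a,b)$ as the joint density of $(m_t,M_t)$ evaluated at a point $(a,b)$ with $a<0<b$, and to produce it by differentiating a confined–Brownian–motion transition density twice in the barriers. Set
\[
G(a,b):=P(m_t>a,\,M_t<b)=P\bigl(a<X_s<b\ \text{for all }s\le t\bigr)=\int_a^b p_t^{a,b}(0,x)\,dx,
\]
where $p_t^{a,b}(0,\cdot)$ denotes the sub\nobreakdash-probability density of $X_t$ killed on leaving the strip $(a,b)$, so that $\int_a^b p_t^{a,b}(0,x)\,dx=P(\tau_{a,b}>t)$ for the exit time $\tau_{a,b}$. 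Since $P(m_t\le\alpha,M_t\le\beta)=P(M_t\le\beta)-G(\alpha,\beta)$, the joint density at $(a,b)$ equals $-\,\partial_a\partial_b G(a,b)$. When $\partial_a$ and $\partial_b$ hit the integral, the contributions from the variable limits and from the $a,b$\nobreakdash-dependence of the barriers inside $p_t^{a,b}$ all vanish, because $p_t^{a,b}(0,x)=0$ for $x\in\{a,b\}$; hence
\[
F(a,b)=-\,\partial_a\partial_b G(a,b)=\int_a^b\Bigl(-\,\partial_a\partial_b\,p_t^{a,b}(0,x)\Bigr)\,dx .
\]

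Next I would remove the drift. Exactly the Girsanov change of measure used in Lemma~\ref{lemma1} shows that, restricted to paths that stay inside $(a,b)$, the Radon--Nikodym density depends only on the endpoints, so $p_t^{a,b}(0,x)=\exp\!\bigl(\tfrac{\mu}{\sigma^{2}}x-\tfrac12\tfrac{\mu^{2}}{\sigma^{2}}t\bigr)\,q_t^{a,b}(0,x)$, with $q_t^{a,b}$ the analogous density for zero\nobreakdash-drift Brownian motion of variance parameter $\sigma^{2}$. The prefactor is independent of $a$ and $b$, so it passes through the two differentiations untouched, and it remains to identify $-\,\partial_a\partial_b q_t^{a,b}(0,x)$ with $h(a,b,x)$. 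For the confined driftless motion I would use the method of images -- iterating the reflection principle of Lemma~\ref{lemma1} across the two barriers -- to get the classical series
\[
q_t^{a,b}(0,x)=\sum_{k=-\infty}^{\infty}\Bigl[g_t\bigl(2k(b-a)-x\bigr)-g_t\bigl(2k(b-a)-2b+x\bigr)\Bigr],\qquad g_t(y):=\frac{1}{\sigma\sqrt{2\pi t}}\,e^{-y^{2}/(2\sigma^{2}t)},
\]
the weight\nobreakdash-$(+1)$ images sitting at the translates of the source and the weight\nobreakdash-$(-1)$ images at the translates of its reflection in the upper barrier. Gaussian tails make this series, together with all its termwise derivatives, absolutely and locally uniformly convergent, which legitimises differentiating under the sum (and, earlier, under the integral).

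It then remains to differentiate each term once in $a$ and once in $b$. The argument of a generic term is an affine function of the barriers, say $y=c_0+c_1b+c_2a$, so $\partial_a\partial_b g_t(y)=c_1c_2\,g_t''(y)$ and $g_t''(y)=\tfrac{1}{\sigma^{2}t}\bigl(\tfrac{y^{2}}{\sigma^{2}t}-1\bigr)g_t(y)$, i.e. a Gaussian times the Hermite\nobreakdash-type bracket that appears in $h_1$ and $h_2$. For the $(+1)$\nobreakdash-images, $y=2k(b-a)-x$ and $c_1c_2=-4k^{2}$, and this family contributes the series $-h_2$ to $-\partial_a\partial_b q_t^{a,b}(0,x)$; for the $(-1)$\nobreakdash-images, $y=2k(b-a)-2b+x$ and $c_1c_2=-2k(2k-2)$, and this family contributes $+h_1$. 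Collecting the two, $-\,\partial_a\partial_b q_t^{a,b}(0,x)=h_1(a,b,x)-h_2(a,b,x)=h(a,b,x)$, and reinstating the Girsanov prefactor yields~(\ref{aux28}).

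The main difficulty is not any single computation but carrying out all this bookkeeping coherently: fixing the image configuration and weights so that the two Gaussian families come out precisely with arguments $2k(b-a)-x$ and $2k(b-a)-2b+x$ and with $k$ ranging over all of $\mathbb{Z}$; tracking the overall sign through the two differentiations together with the reflection\nobreakdash-principle minus signs, so that the answer is $h_1-h_2$ and not $h_2-h_1$; and verifying that every interchange of limit, sum, integral and derivative is legitimate -- which it is, thanks to the super\nobreakdash-exponential decay of $g_t$ and its derivatives -- including the vanishing of the boundary terms when the limits of $\int_a^b$ and the barriers of $q_t^{a,b}$ are differentiated.
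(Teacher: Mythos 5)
Your argument is correct and follows essentially the same route as the paper: Girsanov's change of measure to reduce to driftless Brownian motion, the Feller/L\'{e}vy image series for the density of the motion confined to the strip $(a,b)$ (which the paper cites from Feller (1951) rather than rederiving by the method of images), and then $F(a,b)=-\partial_a\partial_b$ of the confinement probability with the Leibniz boundary terms vanishing because the killed density is zero at the barriers. Your coefficient bookkeeping ($c_1c_2=-4k^2$ for the source images and $-2k(2k-2)$ for the reflected images, combined with $g_t''$) reproduces $h_1-h_2$ exactly as in the paper's equations (\ref{aux26})--(\ref{aux27}).
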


\begin{proof}
Using the change of measure of the proof of Lemma \ref{lemma1} and 
Girsanov's theorem we have:
\begin{equation}
P(a<m_t<M_t<b)=\int_a^b p_t(x;a,b)\exp{\Big{(}\frac{\mu}{\sigma^2}x-\frac 12
\frac{\mu^2}{\sigma^2}t\Big{)}}dx,\label{aux21}
\end{equation}
which gives $F(a,b)$ via:
\begin{equation}
F(a,b)=-\frac{\partial^2 P(a<m_t<M_t<b)}{\partial a \;\partial b}.\label{aux22}
\end{equation}
Leibniz rule of differentiation:
\begin{equation}
\frac{\partial}{\partial z}\int_{a(z)}^{b(z)}f(x,z)\;dx=
\int_{a(z)}^{b(z)}\frac{\partial f(x,z)}{\partial z}dx
+f(b(z),z)\frac{\partial b}{\partial z}-f(a(z),z)
\frac{\partial a}{\partial z}
\label{Leibniz}
\end{equation}
gives the partial derivative of (\ref{Feller}) wrt $b$:
\begin{eqnarray}
\int_a^b\frac{1}{\sigma\sqrt{t}}
\sum_{k=-\infty}^{\infty}\Bigg{[}\frac{\partial\phi}{\partial b}
\Big{(}\frac{2k(b-a)-x}{\sigma\sqrt{t}}\Big{)}
-\frac{\partial\phi}{\partial b}
\Big{(}\frac{2k(b-a)-2b+x}{\sigma\sqrt{t}}\Big{)}
\Bigg{]}
\nonumber\\
\times \exp{\Big{(}\frac{\mu}{\sigma^2}x-\frac 12
\frac{\mu^2}{\sigma^2}t\Big{)}}dx.\label{aux24}
\end{eqnarray}
Differentiating wrt $a$ this last equation gives:
\begin{eqnarray}
F(a,b)=-\int_a^b\frac{1}{\sigma\sqrt{t}}
\sum_{k=-\infty}^{\infty}\Bigg{[}\frac{\partial^2\phi}{\partial a\;\partial b}
\Big{(}\frac{2k(b-a)-x}{\sigma\sqrt{t}}\Big{)}
-\frac{\partial^2\phi}{\partial a\;\partial b}
\Big{(}\frac{2k(b-a)-2b+x}{\sigma\sqrt{t}}\Big{)}
\Bigg{]}
\nonumber\\
\times 
\exp{\Big{(}\frac{\mu}{\sigma^2}x-\frac 12
\frac{\mu^2}{\sigma^2}t\Big{)}}\;dx.\hspace*{0.4cm}\label{aux25}
\end{eqnarray}
Direct calculation gives:
\begin{eqnarray}
\frac{\partial^2\phi}{\partial a\;\partial b}
\Big{(}
  \frac{2k(b-a)-x}{\sigma\sqrt{t}}
\Big{)}
=
\frac{\partial}{\partial a}
\Bigg{[}
  \frac{(-2k)[2k(b-a)-x]}{\sqrt{2\pi}t\sigma^2} 
  \exp{\Big{(}-\frac{[2k(b-a)-x]^2}{2t\sigma^2}\Big{)}}
\Bigg{]}
\nonumber\\
=
\frac{4k^2}{\sqrt{2\pi}t\sigma^2}
\exp{\Big{(}-\frac{[2k(b-a)-x]^2}{2t\sigma^2}\Big{)}}
\Big{(}
  1-\frac{[2k(b-a)-x]^2}{t\sigma^2}
\Big{)},\hspace*{0.3cm}\label{aux26}
\end{eqnarray}
\begin{eqnarray}
\frac{\partial^2\phi}{\partial a\;\partial b}
\Big{(}
 \frac{2k(b-a)-2b+x}{\sigma\sqrt{t}}
\Big{)}\hspace*{9cm}
\nonumber\\
=
\frac{\partial}{\partial a}
\Bigg{[}
 \frac{-(2k-2)[2k(b-a)-2b+x]}{\sqrt{2\pi}t\sigma^2} 
\exp{\Big{(}-\frac{[2k(b-a)-2b+x]^2}{2t\sigma^2}\Big{)}} 
\Bigg{]}\hspace*{1cm}\label{aux27}
\nonumber\\
=
\frac{4k(k-1)}{\sqrt{2\pi}t\sigma^2}
\exp{\Big{(}-\frac{[2k(b-a)-2b+x]^2}{2t\sigma^2}\Big{)}} 
\Big{(}
 1-\frac{[2k(b-a)-2b+x]^2}{t\sigma^2}
\Big{)}.\hspace*{0.7cm}
\end{eqnarray}
Substituting (\ref{aux26})-(\ref{aux27}) in (\ref{aux25}) gives
the result.
\end{proof}

\begin{Propo} \label{propo_joint}
The joint density function $F(a,b)$ of $M_t$ and $m_t$ can be represented as:
\begin{eqnarray}
F(a,b)&=&F_1(a,b)-F_2(a,b)-F_3(a,b)+F_4(a,b)\\
&-&F_5(a,b)+F_6(a,b)+F_7(a,b)-F_8(a,b),\nonumber
\end{eqnarray}
with
\begin{eqnarray}
\lefteqn{F_1(a,b)=\sum_{k=-\infty}^{\infty}\frac{4k(k-1)}{t\sigma^3\sqrt{2\pi t}}
\;\;[(2k-1)b-2ka+\mu t]}\hspace*{2cm}\nonumber
\\
&&\times\exp\left\{
-\frac{\mu}{\sigma^2}[2(k-1)b-2ka]-\frac{[(2k-1)b-2ka-\mu t]^2}{2t\sigma^2}
\right\},\hspace*{0.8cm}
\end{eqnarray}
\begin{eqnarray}
\lefteqn{F_2(a,b)=\sum_{k=-\infty}^{\infty}\frac{4k(k-1)}{t\sigma^3\sqrt{2\pi t}}
\;\;[2(k-1)b-(2k-1)a+\mu t]}\hspace*{1cm}\nonumber
\\
&&\times\exp\left\{
-\frac{\mu}{\sigma^2}[2(k-1)b-2ka]-\frac{[2(k-1)b-(2k-1)a-\mu t]^2}{2t\sigma^2}
\right\},\hspace*{0.9cm}
\end{eqnarray}
\begin{eqnarray}
F_3(a,b)=\sum_{k=-\infty}^{\infty}\frac{4k(k-1)\mu^2}{2\sigma^4}
\exp\left\{
-\frac{\mu}{\sigma^2}[2(k-1)b-2ka]
\right\}\hspace*{3cm}\nonumber
\\ 
\times\operatorname{erf}\Bigg{(}\frac{(2k-1)b-2ka-\mu t}{\sigma\sqrt{2t}}\Bigg{)},
\hspace*{0.6cm}
\end{eqnarray}
\begin{eqnarray}
F_4(a,b)=\sum_{k=-\infty}^{\infty}\frac{4k(k-1)\mu^2}{2\sigma^4}
\exp\left\{
-\frac{\mu}{\sigma^2}[2(k-1)b-2ka]
\right\}\hspace*{3cm}\nonumber 
\\
\times\operatorname{erf}\Bigg{(}\frac{2(k-1)b-(2k-1)a-\mu t}{\sigma\sqrt{2t}}\Bigg{)},
\hspace*{0.6cm}
\end{eqnarray}
\begin{eqnarray}
\lefteqn{F_5(a,b)
=\sum_{k=-\infty}^{\infty}\frac{4k^2}{t\sigma^3\sqrt{2\pi t}}
\;\;[(2k+1)b-2ka+\mu t]}\hspace*{4cm}\nonumber
\\
&&\times\exp\left\{
-\frac{\mu}{\sigma^2}2k(b-a)-\frac{[(2k+1)b-2ka-\mu t]^2}{2t\sigma^2}
\right\},\hspace*{1cm}
\end{eqnarray}
\begin{eqnarray}
\lefteqn{F_6(a,b)
=\sum_{k=-\infty}^{\infty}\frac{4k^2}{t\sigma^3\sqrt{2\pi t}}
\;\;[2kb-(2k-1)a+\mu t]}\hspace*{5cm}
\\
&&\times\exp\left\{
-\frac{\mu}{\sigma^2}2k(b-a)-\frac{[2kb-(2k-1)a-\mu t]^2}{2t\sigma^2}
\right\},\hspace*{0.3cm}\nonumber
\end{eqnarray}
\begin{eqnarray}
F_7(a,b)=\sum_{k=-\infty}^{\infty}\frac{4k^2\mu^2}{2\sigma^4}
\exp\left\{
-\frac{\mu}{\sigma^2}2k(b-a)
\right\}\; 
\operatorname{erf}\Bigg{(}\frac{(2k+1)b-2ka-\mu t}{\sigma\sqrt{2t}}\Bigg{)},
\end{eqnarray}
\begin{eqnarray}
F_8(a,b)=\sum_{k=-\infty}^{\infty}\frac{4k^2\mu^2}{2\sigma^4}
\exp\left\{
-\frac{\mu}{\sigma^2}2k(b-a)
\right\}\; 
\operatorname{erf}\Bigg{(}\frac{2kb-(2k-1)a-\mu t}{\sigma\sqrt{2t}}\Bigg{)},
\end{eqnarray}
where
\begin{equation}
\operatorname{erf}(x)=\frac{2}{\sqrt{\pi}}\int_0^x e^{-t^2}dt.\label{aux33}
\end{equation}
\end{Propo}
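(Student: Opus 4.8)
The statement follows by integrating, summand by summand, the representation (\ref{aux28}) of Lemma \ref{lemma_main} over $x\in[a,b]$. Here is the plan.

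First I would justify interchanging $\sum_{k}$ and $\int_a^b$. For fixed $a<b$ and $x\in[a,b]$ the quantities $2k(b-a)-2b+x$ and $2k(b-a)-x$ grow linearly in $k$, so the Gaussian factors in $h_1,h_2$ are bounded by $e^{-ck^2}$ for large $|k|$ (with $c>0$ depending only on $a,b,t,\sigma$), while the prefactors and the bracketed polynomials are $O(k^4)$ and $e^{\mu x/\sigma^2}$ is bounded on $[a,b]$; hence the $k$-th summand is dominated uniformly in $x$ by a term of order $k^4 e^{-ck^2}$, which is summable. Dominated convergence then licenses term-by-term integration, and the same bound will later license the reindexing $k\mapsto -k$ inside an absolutely convergent series.

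Second, I would evaluate the two families of scalar integrals. Each summand of $\int_a^b h_1\exp\{\tfrac{\mu}{\sigma^2}x-\tfrac{\mu^2 t}{2\sigma^2}\}\,dx$ and of $\int_a^b h_2\exp\{\tfrac{\mu}{\sigma^2}x-\tfrac{\mu^2 t}{2\sigma^2}\}\,dx$ has the shape $C_k\int_a^b\bigl(1-\tfrac{(\varepsilon x+\beta)^2}{t\sigma^2}\bigr)\exp\bigl(-\tfrac{(\varepsilon x+\beta)^2}{2t\sigma^2}+\tfrac{\mu}{\sigma^2}x-\tfrac{\mu^2 t}{2\sigma^2}\bigr)\,dx$ with $\varepsilon=\pm1$ and $C_k$ not depending on $x$. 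Substituting $u=\varepsilon x+\beta$ and completing the square in the exponent, the $x$-independent leftover is exactly $\exp\{-\tfrac{\mu}{\sigma^2}(2(k-1)b-2ka)\}$ for the $h_1$-family and $\exp\{-\tfrac{\mu}{\sigma^2}2k(b-a)\}$ for the $h_2$-family (these are the exponential prefactors of $F_1,\dots,F_8$), and the spurious factor $\exp(\mu^2 t/(2\sigma^2))$ produced by the square-completion cancels the $\exp(-\mu^2 t/(2\sigma^2))$ in (\ref{aux28}); one is left with $\int\bigl(1-\tfrac{u^2}{t\sigma^2}\bigr)\exp\bigl(-\tfrac{(u-\mu t)^2}{2t\sigma^2}\bigr)\,du$ (for $h_1$), resp. the same integrand with $(u+\mu t)^2$ (for $h_2$), between the endpoints $u$ corresponding to $x=a$ and $x=b$. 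A further shift $v=u-\mu t$ (resp. $v=u+\mu t$) and the elementary identities $\tfrac{d}{dv}\bigl(ve^{-v^2/(2t\sigma^2)}\bigr)=\bigl(1-\tfrac{v^2}{t\sigma^2}\bigr)e^{-v^2/(2t\sigma^2)}$ and $\int e^{-v^2/(2t\sigma^2)}\,dv=\sigma\sqrt{t}\sqrt{\pi/2}\,\operatorname{erf}\bigl(v/(\sigma\sqrt{2t})\bigr)$ yield a closed-form antiderivative made of a Gaussian times an affine function of $v$, plus an $\operatorname{erf}$ term (the latter arising from the constant $-\mu^2 t/\sigma^2$ that appears when $1-u^2/(t\sigma^2)$ is re-expanded through $v$). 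Evaluating at $x=b$ and $x=a$ and tidying the numerical constant (in particular $\tfrac{1}{t\sigma^3\sqrt{2\pi t}}\cdot\tfrac{\mu^2 t^{3/2}\sqrt{\pi}}{\sigma\sqrt2}=\tfrac{\mu^2}{2\sigma^4}$), the $h_1$-family reproduces $F_1-F_2-F_3+F_4$ directly, the $x=b$ endpoint contributing $F_1$ and $-F_3$, the $x=a$ endpoint contributing $-F_2$ and $+F_4$.

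The step I expect to be the real obstacle is matching the $h_2$-family to $F_5,\dots,F_8$. Substituting $x=b$ and $x=a$ there gives $u=(2k-1)b-2ka$ and $u=2kb-(2k+1)a$, producing Gaussians whose affine factor carries $-\mu t$ and whose exponent carries $+\mu t$ — the opposite of the pattern displayed in $F_5,\dots,F_8$ — and whose $a$-coefficients are off by one from those written. Both discrepancies are repaired by the symmetry $k\mapsto -k$ of the $h_2$-sum: the prefactor $4k^2$ is even in $k$, and after the sign change $x\mapsto 2k(b-a)-u$ the exponential weight $e^{-\frac{\mu}{\sigma^2}2k(b-a)}$ is also even in $k$, so the series is unchanged under $k\mapsto -k$. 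Under this reindexing $2kb-(2k+1)a+\mu t\mapsto -\bigl(2kb-(2k-1)a-\mu t\bigr)$ and $(2k-1)b-2ka+\mu t\mapsto -\bigl((2k+1)b-2ka-\mu t\bigr)$, so, using $\operatorname{erf}(-z)=-\operatorname{erf}(z)$, the $x=a$ endpoint turns into $-F_6$ and $+F_8$ and the $x=b$ endpoint into $+F_5$ and $-F_7$; hence $\int_a^b h_2\exp\{\tfrac{\mu}{\sigma^2}x-\tfrac{\mu^2 t}{2\sigma^2}\}\,dx=F_5-F_6-F_7+F_8$. Since $F(a,b)=\int_a^b h_1\exp\{\cdots\}\,dx-\int_a^b h_2\exp\{\cdots\}\,dx$ by (\ref{aux28}), subtracting gives $F=F_1-F_2-F_3+F_4-F_5+F_6+F_7-F_8$, which is the asserted eight-term decomposition, the $\operatorname{erf}$ being the one defined in (\ref{aux33}).
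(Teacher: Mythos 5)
Your proposal is correct and follows essentially the same route as the paper's (very terse) proof: term-by-term integration of the representation in Lemma \ref{lemma_main}, combining/completing the square in the exponent to extract the prefactors $e^{-\frac{\mu}{\sigma^2}[2(k-1)b-2ka]}$ and $e^{-\frac{\mu}{\sigma^2}2k(b-a)}$, an antiderivative built from $\frac{d}{dv}\big(ve^{-v^2/(2t\sigma^2)}\big)$ plus an $\operatorname{erf}$ term, and evaluation at the endpoints $x=a$ and $x=b$ to produce the eight terms with the stated signs. The $k\mapsto -k$ reindexing you make explicit for the $h_2$-family is implicit in the paper's choice of its second change of variable $z=(x+2k(b-a)-\mu t)/(\sigma\sqrt{t})$, and your domination bound justifying the interchange of sum and integral supplies a detail the paper omits.
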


\begin{proof}
From Lemma \ref{lemma_main} we write $F(a,b)$ as the difference of two terms,
which we denote $I_1$ and $I_2$:

\begin{eqnarray}
I_1:=\int_a^b h_1(a,b,x)\exp{\Big{(}\frac{\mu}{\sigma^2}x-\frac 12
\frac{\mu^2}{\sigma^2}t\Big{)}}\;dx,\nonumber\\
I_2:=\int_a^b h_2(a,b,x)\exp{\Big{(}\frac{\mu}{\sigma^2}x-\frac 12
\frac{\mu^2}{\sigma^2}t\Big{)}}\;dx.\label{i1}
\end{eqnarray}
In each $I_1$ and $I_2$ we combine the exponents and then use, respectively, 
a change of variable:      
\begin{equation}
z=\frac{x+2k(b-a)-2b-\mu t}{\sigma\sqrt{t}},\;\;
z=\frac{x+2k(b-a)-\mu t}{\sigma\sqrt{t}}, \label{change12}
\end{equation}
followed by an integration by parts for $\int z^2\exp(-z^2/2)dz$ and replacement
of $\Phi(x)$ by $\operatorname{erf}(x)$ via $\Phi(x)=0.5\operatorname{erf}
(x/\sqrt{2})+0.5$. The resulting eight terms are then then denoted $F_i(a,b)$,
$i=1,\dots, 8$.
\end{proof}
\medskip

We use the above expression to derive the density of the range. To make it
suitable for comparison with that obtained by Sutrick et al (1997) in their 
Proposition 1, we change $k\rightarrow k+1$ in the summations of 
$F_1, F_2, F_3$ and $F_4$.

\begin{Propo}
The density function $f_{R_t}(r)$ for the range of an arithmetic Brownian motion 
can be written as:
\begin{equation}
f_{R_t}(r)=\frac{1}{\sigma\sqrt{t}}\sum_{k=-\infty}^{\infty} 4k^2 I(k)
+\frac{1}{\sigma\sqrt{t}}\sum_{k=-\infty}^{\infty}4k(k+1)J(k),
\end{equation}
where
\begin{eqnarray}
I(k)&=&e^{-\frac{2\mu kr}{\sigma^2}}(1+c^2)
(\phi(K_1-c)-2\phi(K_0-c)+\phi(K_{-1}-c))\nonumber\\
&+&e^{-\frac{2\mu kr}{\sigma^2}}
[(c^2K_1-2c-c^3)\Phi(K_1-c)-2(c^2K_0-2c-c^3)\Phi(K_0-c)\nonumber\\
&+&(c^2K_{-1}-2c-c^3)\Phi(K_{-1}-c)], 
\end{eqnarray}
and
\begin{eqnarray}
J(k)&=&e^{\frac{2\mu kr}{\sigma^2}}(\phi(K_1+c)-\phi(K_0+c))
-e^{\frac{2\mu(k+1)r}{\sigma^2}}(\phi(K_2+c)-\phi(K_1+c))\nonumber\\
&&+e^{-\frac{2\mu kr}{\sigma^2}}\Big{(}-\frac c2 \Phi(K_1-c)
+\frac c2 \Phi(K_0-c)\Big{)}\nonumber\\
&&-e^{-\frac{2\mu (k+1)r}{\sigma^2}}\Big{(}-\frac c2 \Phi(K_2-c)
+\frac c2 \Phi(K_1-c)\Big{)}
\nonumber\\
&&+e^{\frac{2\mu kr}{\sigma^2}}(\Phi(K_1+c)-\Phi(K_0+c))\nonumber\\
&&-e^{\frac{2\mu(k+1)r}{\sigma^2}}(\Phi(K_2+c)-\Phi(K_1+c)),
\end{eqnarray}
with 
\begin{eqnarray*}
K_2=\frac{(2k+2)r}{\sigma\sqrt{t}},\;K_1=\frac{(2k+1)r}{\sigma\sqrt{t}},\;
K_0=\frac{2kr}{\sigma\sqrt{t}},\;K_{-1}=\frac{(2k-1)r}{\sigma\sqrt{t}},\;
c=\frac{\mu\sqrt{t}}{\sigma}.
\end{eqnarray*}
\end{Propo}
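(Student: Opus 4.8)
The plan is to pass from the joint density $F(a,b)$ of the pair $(m_t,M_t)$ in Proposition~\ref{propo_joint} to the density of $R_t=M_t-m_t$ by the elementary ``difference of two jointly distributed variables'' formula. Since $X_0=0$ forces $m_t\le 0\le M_t$ almost surely, we have $P(R_t\le r)=\iint_{\{a\le 0\le b,\ b-a\le r\}}F(a,b)\,da\,db$, and differentiating in $r$ (Leibniz' rule, the boundary contribution vanishing because the inner range is empty at $a=-r$) gives
\[
f_{R_t}(r)=\int_{-r}^{0}F(a,a+r)\,da .
\]
I would then substitute $b=a+r$ into the eight summands $F_1,\dots,F_8$, interchange the sum over $k$ with the finite integral over $a\in[-r,0]$, and evaluate the eight resulting one-dimensional integrals in closed form.

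For the four ``Gaussian'' terms $F_1,F_2,F_5,F_6$, the substitution $b=a+r$ turns the bracketed prefactor into an affine function of $a$; for $F_5,F_6$ the non-quadratic exponential factor is already constant in $a$ (equal to $e^{-2\mu kr/\sigma^2}$), while for $F_1,F_2$ its linear-in-$a$ part is absorbed by completing the square, reproducing a clean Gaussian in $a$ together with a prefactor $e^{2\mu kr/\sigma^2}$. A linear change of variable $z=(\text{affine in }a)/(\sigma\sqrt t)$ normalizes the Gaussian to $\phi(z)$ and sends the endpoints $a=0$ and $a=-r$ to arguments of the form $K_j-c$ (resp.\ $K_j+c$), after which $\int z\phi(z)\,dz=-\phi(z)$ and $\int\phi(z)\,dz=\Phi(z)$ finish the computation.

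For the four error-function terms $F_3,F_4,F_7,F_8$, after $b=a+r$ the integrand is an exponential in $a$ times $\operatorname{erf}(\text{affine in }a)$ (the exponential being constant in $a$ for $F_7,F_8$, affine in the exponent for $F_3,F_4$), and I would integrate by parts, differentiating the error function---whose derivative, by (\ref{aux33}), is again a Gaussian---and integrating the exponential. Rewriting $\operatorname{erf}(x)=2\Phi(x\sqrt{2})-1$, the boundary terms deliver the $\Phi(K_j\pm c)$ contributions, and the leftover exponential-times-Gaussian integral, completed to a square, delivers further $\phi(K_j\pm c)$ and $\Phi(K_j\pm c)$ contributions. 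The $\mu^2/\sigma^4$ prefactors of $F_3,F_4,F_7,F_8$ rescale to $c^2/(\sigma\sqrt t)$ under $z$; multiplied against the affine endpoint values ($\propto K_j$) and the constants ($\propto c$) produced by the integration by parts, they generate exactly the coefficients $c^2K_j-2c-c^3$ on the cumulative-distribution terms, while adding the Gaussian remainders to the genuine Gaussian terms $F_5,F_6$ (resp.\ $F_1,F_2$) produces the $(1+c^2)$ prefactor on the density terms. Finally, the re-index $k\to k+1$ in the sums of $F_1,\dots,F_4$ (announced just before the statement) aligns every exponential prefactor into $e^{\pm 2\mu kr/\sigma^2}$ or $e^{\pm 2\mu(k+1)r/\sigma^2}$ and turns $4k(k-1)$ into $4k(k+1)$; grouping the terms carrying $4k^2$ yields $\frac{1}{\sigma\sqrt t}\sum_k 4k^2 I(k)$ and those carrying $4k(k+1)$ yield $\frac{1}{\sigma\sqrt t}\sum_k 4k(k+1)J(k)$, which is the assertion.

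Each of the eight integrals is elementary, so the real difficulty is purely organizational: keeping the signs imposed by the alternating combination $F_1-F_2-F_3+F_4-F_5+F_6+F_7-F_8$ together with the $\mp 1/(\sigma\sqrt t)$ factors, matching each endpoint $a\in\{0,-r\}$ to the correct shifted argument $K_j\pm c$, carrying the re-index through all four affected terms consistently, and verifying that the scattered $\phi$- and $\Phi$-coefficients coalesce into precisely $1+c^2$ and $c^2K_j-2c-c^3$. I would also record the routine justification---dominated convergence, using the super-exponential Gaussian decay of the summands uniformly in $a\in[-r,0]$---for the sum/integral interchange, and note that the resulting series converge absolutely for every $r>0$.
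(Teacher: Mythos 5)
Your proposal is correct and follows essentially the same route as the paper: your formula $f_{R_t}(r)=\int_{-r}^{0}F(a,a+r)\,da$ is, after the substitution $a=u-r$, exactly the paper's $f(r)=\int_0^r F(u-r,u)\,du$ (the paper reaches it via the Jacobian of $a=u-v,\ b=u$ rather than by differentiating the CDF, a trivial difference), and the subsequent change of variable, integration by parts on the $\operatorname{erf}$ terms, and the re-index $k\to k+1$ in $F_1,\dots,F_4$ are precisely the steps the paper invokes. Your write-up actually supplies more of the organizational detail than the paper's one-line justification does.
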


\begin{proof}
After replacing $k$ by $k+1$ in $F_i$, ${i=1,\dots,4}$ of Proposition 
\ref{propo_joint},
a two-dimensional transformation $a=u-v, b=u$, gives, via Jacobian, the density
of the range and the running maximum. Its marginal density is the one we seek:
$$
f(r)=\int_0^rF(u-r,u)du.
$$
Applying a change of variable and integration by parts gives the result.
\end{proof}

\begin{Remar}\label{remark2}
This result 
corrects that of Sutrick et al (1997) where there appears to be a mistake
in the computations.
\end{Remar}

\begin{Remar}\label{equiv}
The probabilistic starting point for both Kon\'{e} (1996) and Sutrick et al 
(1997) is 
$p_t(x;a,b)$ $dx:=P(a<m_t<M_t<b,x\leq X_t<x+dx|X_0=0)$. The former uses a result 
of Feller (1951) that can be traced to L\'{e}vy (1948):
\begin{equation}
p_t(x;a,b)=\frac{1}{\sigma\sqrt{t}}\sum_{k=-\infty}^\infty
\Bigg{[}\phi\Big{(}\frac{2k(b-a)-x}{\sigma\sqrt{t}}\Big{)}
-\phi\Big{(}\frac{2k(b-a)-2b+x}{\sigma\sqrt{t}}\Big{)}\Bigg{]},
\label{Feller}
\end{equation}
 while the latter uses a result of Billingsley (1968):
\begin{equation}
p_t(x;a,b)=\frac{1}{\sigma\sqrt{t}}\sum_{k=-\infty}^\infty
\Bigg{[}\phi\Big{(}\frac{x+2k(b-a)}{\sigma\sqrt{t}}\Big{)}
-\phi\Big{(}\frac{2b-x+2k(b-a)}{\sigma\sqrt{t}}\Big{)}\Bigg{]}.
\label{Billingsley}
\end{equation}
The probabilistic results 
(\ref{Feller}) and (\ref{Billingsley}) are in fact equivalent, as one can be obtained from the other by 
appropriately replacing the summation index $k$ 
with 
$-k$ 
and $\phi(x)$ with $\phi(-x)$.
\end{Remar}
 
\section{The method of moments applied to volatility estimation
using  daily high, low, opening and closing prices}
\recount
\recoup
In this section we apply Theorem \ref{theor1} to the estimation of the drift 
and  volatility parameters of the stock price from 
market data on high, low, opening and closing prices.

\begin{Defin}\label{days}
i) A {\em trading day} is the period elapsed between the opening and
the closing bells of a calendar day.

ii) A {\em virtual trading day} is the after-hours period 
beginning from the closing of one trading day and ending at the opening 
of the next trading day.

iii) A {\em one-day period} consists of one trading day followed by one
virtual trading day.
\end{Defin}

We assume that the stock price $S_t$ 
has
the usual geometric Brownian motion dynamics:
\begin{equation}
\frac{dS_t}{S_t}=\mu_s\; dt+\sigma\; dW_t, \;\;t\geq 0.
\label{equation6} 
\end{equation}
Then the log-stock price $\log S_t$ is the arithmetic Brownian motion $X_t$ 
defined in (\ref{equation1})
with drift coefficient 
\begin{equation}
\mu=\mu_s-\frac{\sigma^2}{2}.   \label{equation7}
\end{equation} 
Note that $\mu_s$ is the {\em one-day period} drift of the stock price $S_t$, while 
$\mu$ is the similar drift of the log-price $X_t=\log S_t$;  the volatility 
parameter $\sigma$ is the same for both $S_t$ and $X_t$.

The market data used for parameter estimation is as follows: for each one-period
day $i\in\{1, 2, \dots, n\}$ we denote by $S_{i-1}$ the opening price and 
by 
$H_i$ and $L_i$ the intra-day high and low prices, respectively (i.e. the high and 
low are observed only during the trading day, and not the virtual trading day
- see Figure \ref{fig0}). 

The after-hours arrival of information in the market determines a jump between 
the closing price 
of one trading day and the opening price 
of the next day. We model this jump by letting the same geometric
Brownian motion $S_t$ have an unobserved evolution during a virtual trading
day. The length of this virtual trading day is assumed to be, on average,
a fraction $f$
of the unit length of the one-day period. 
\begin{Remar}\label{important}This assumption follows 
Garman and Klass (1980) and
Yang and Zhang (2000), except that they assume the after-hours trading day precedes
the actual trading day. They call it the opening jump (from $C_{i-1}$ to $O_i$),
and assume it is modeled
by a Poisson process.  
\end{Remar}
Thus for $i\in\{1, 2, \dots, n\}$ we have (see Figure \ref{fig0}):
\begin{equation}
\mbox{OPEN}(i)=O_i=S_{i-1},\;
\mbox{CLOSE}(i)=C_i=S_{i-f},\;
\mbox{HIGH}(i)=H_i,\;
\mbox{LOW}(i)=L_i,
\label{notation}
\end{equation}
where:
\begin{equation}
H_i=\sup_{t\in[i-1,i-f]}S_t,\;\;L_i=\inf_{t\in[i-1,i-f]}S_t.
\label{high_low}
\end{equation}
\begin{figure}[htb]
\includegraphics{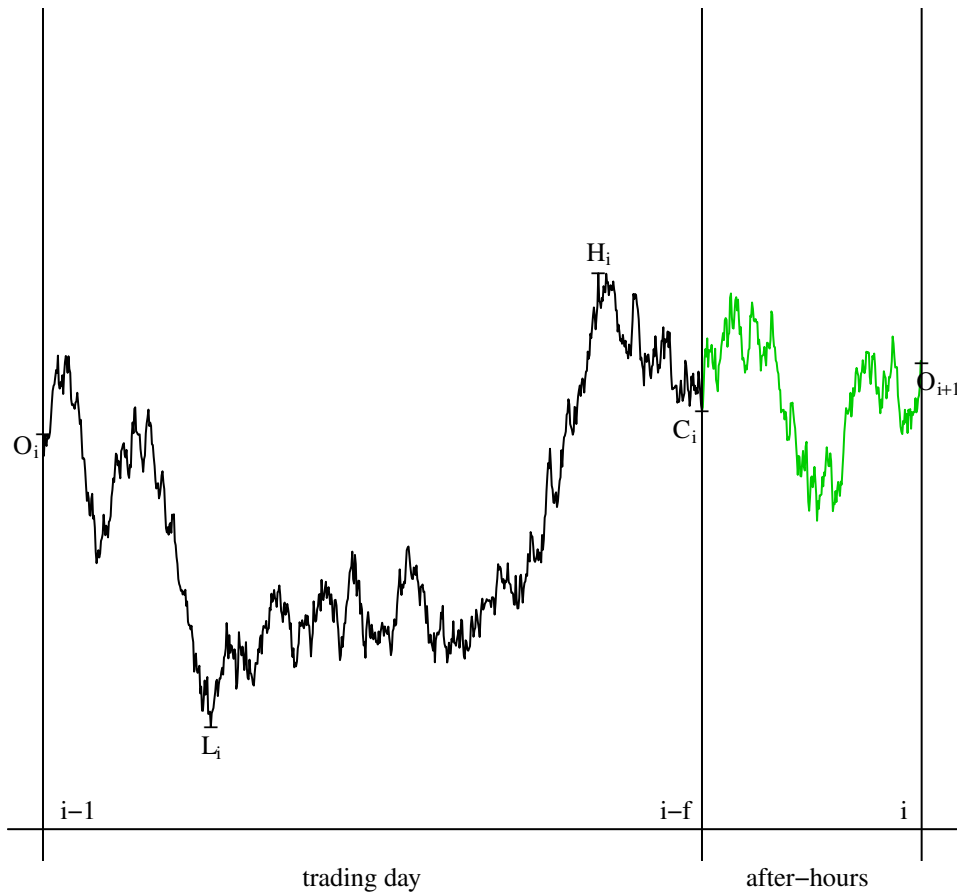}
\caption{A one-day period consisting of a trading day and an after-hours period}
\label{fig0}
\end{figure}
The evolution of the price during the trading period $i-1\leq t<i-f$ is given by:
\begin{equation}
log \;S_t=log\; O_i 
+\mu\;(t-i+1)+\sigma (W_t-W_{t-i+1}),
\label{trad}
\end{equation}
and during the after-hours virtual trading period $i-f\leq t<i$ by:
\begin{equation}
log \;S_t=log \;C_i
+\mu (t-i+f)+\sigma (W_t-W_{i-f}).
\label{virt}
\end{equation}


Taking expectation in (\ref{trad}) when $t\nearrow (i-f)$ 
and in (\ref{virt}) when $t\nearrow i$ gives:
\begin{eqnarray}
&&E\Big{[}\log \frac{C_i
}{O_i
}\Big{]}
=E\Big{[}\log\frac{S_{i-f}}{S_{i-1}}\Big{]}=\mu\;(1-f),\label{drift1}\\
&&E\Big{[}\log \frac{
O_{i+1}
}{C_i
}\Big{]}
=E\Big{[}\log\frac{S_{i}}{S_{i-f}}\Big{]}=\mu\;f.\label{drift2}
\end{eqnarray}
Using $W_t-W_s$ identically distributed to $W_{t-s}$, the trading day and 
virtual trading day variances are obtained, respectively, as:
\begin{alignat}{4}
\operatorname{VAR}&\Big{[}\log
\frac{C_i
}{O_i
}\Big{]}=
\operatorname{VAR}\Big{[}\log
\frac{S_{i-f}}{S_{i-1}}\Big{]}
=\;\sigma^2\;(1-f),\nonumber\\
\operatorname{VAR}&\Big{[}\log
\frac{O_{i+1}
}{C_i
}\Big{]}=
\operatorname{VAR}\Big{[}\log
\frac{S_{i}}{S_{i-f}}\Big{]}
=\;\sigma^2\;f.\label{variances}
\end{alignat}
Thus, we can write heuristically:
\[
\sigma^2=\operatorname{VAR}\Big{[}\log
\frac{C_{i}}{O_i}\Big{]}+
\operatorname{VAR}\Big{[}\log
\frac{O_{i+1}}{C_i}\Big{]}
=\operatorname{VAR(trading\;day)}+
\operatorname{VAR(after\;hours)}.
\]

To estimate the variance over the trading day we use the method of moments.
The range $R_{1-f}=\log H_1-\log L_1$ 
of the arithmetic Brownian motion $X_t=\log S_t$ over the trading day $[0,1-f]$
was obtained in equation (\ref{er}):
\begin{equation}
E(R_{1-f})=ER(\mu,\sigma,1-f)=ER(\mu(1-f),\sigma\sqrt{1-f},1).\label{new_er}
\end{equation} 
In (\ref{new_er}) we estimate $E(R_{1-f})$ using the daily range data:
\begin{equation}
k_1:=\frac 1n \sum_{i=1}^n \log \frac{H_i}{L_i}.
\label{equation9}
\end{equation}
 and  $\mu(1-f)$  by (see (\ref{drift1})): 
\begin{equation}
k_2:=\frac 1n \sum_{i=1}^n 
\log\frac{C_i
}{O_i
}. 
\label{equation8}
\end{equation}
This leads to the following equation to be solved for $x$, the estimate of 
$\sigma\sqrt{1-f}$:
\begin{equation}
k_1=h\Big{(}\frac{k_2}{x},\frac{x^2}{k_2}\Big{)}.
\label{main_eq}
\end{equation}
The squared of this solution gives an estimate $V_i=x^2$ of
the variance (volatility squared) corresponding to the trading day part of 
a one-day period. 

For the after-hours part of the one-day period we have two choices: $V_0$
(centered approach) used in Yang and Zhang (2000), or $V_0'$ (non-centered)
used in Garman and Klass (1980). Using the former (i.e. the sample standard variance $V_0$),
we obtain the estimate for the variance of the entire one-day period as:
\begin{equation}
V_Z:=V_0+V_i,\label{est}
\end{equation}
or, in annualized form, as:
\begin{equation}
\sigma_a^2:=252 V_Z.\label{ann}
\end{equation}


\medskip

\noindent Denoting by $V_C$ the sample variance of $\log(C_i/O_i)$ used in their estimator 
by Yang and Zhang (2000) :
$$V_{YZ}=V_0+kV_C+(1-k)V_{RS},$$ 
where $V_{RS}$ is the estimator of Rogers and Satchell (1991) and 
Rogers, Satchell and Yoon (1994) and $k$ is a constant, we note the following.
\begin{Remar}\label{comparison}
i) The term $V_i$ replaces the linear combination of $V_C$ and
$V_{RS}$ used by Yang and Zhang (2000) for the intra-day trading period, and it does not need estimating the value
of $k$ that achieves minimum variance. 

ii) Our estimator is a true range-based estimator (log-range to be precise
since it uses $\log({H_i}/{L_i})$), 
unlike that of Yang and Zhang (2000). 

iii) Our estimator $V_Z$ is independent of both the drift and the weight $f$ of
the after-hours information.
\end{Remar}

%

%
%

\begin{Examp}\label{examp30}
Consider the market data on the high, low, opening and closing prices for the IBM stock 
for the period from May 26, 2010 to June 18, 2010.
For each of these days we consider the historical 3-month \footnote{For parameter estimation Hull 
(2006, p. 287) recommends using historical data of 90 to 180 days.} estimates of $k_1$ and $k_2$, 
and we solve the corresponding equation (\ref{main_eq}). 

The solution is our estimate of the volatility $\sigma\sqrt{1-f}$ corresponding to the trading day, 
and we present it in annualized form (i.e. multiplied by $\sqrt{252}$) in Figure \ref{fig_chart0}.
We compare our estimate of the volatility corresponding to
a one-day period with the one of Yang and Zhang (2000). On June 18, 2010 they are
$\sigma_a=0.2781$ (see (\ref{ann})) and $0.2982$ (annualized
volatility corresponding to $V_{YZ}$).

\begin{figure}[htb]
\centerline{\epsfig{file=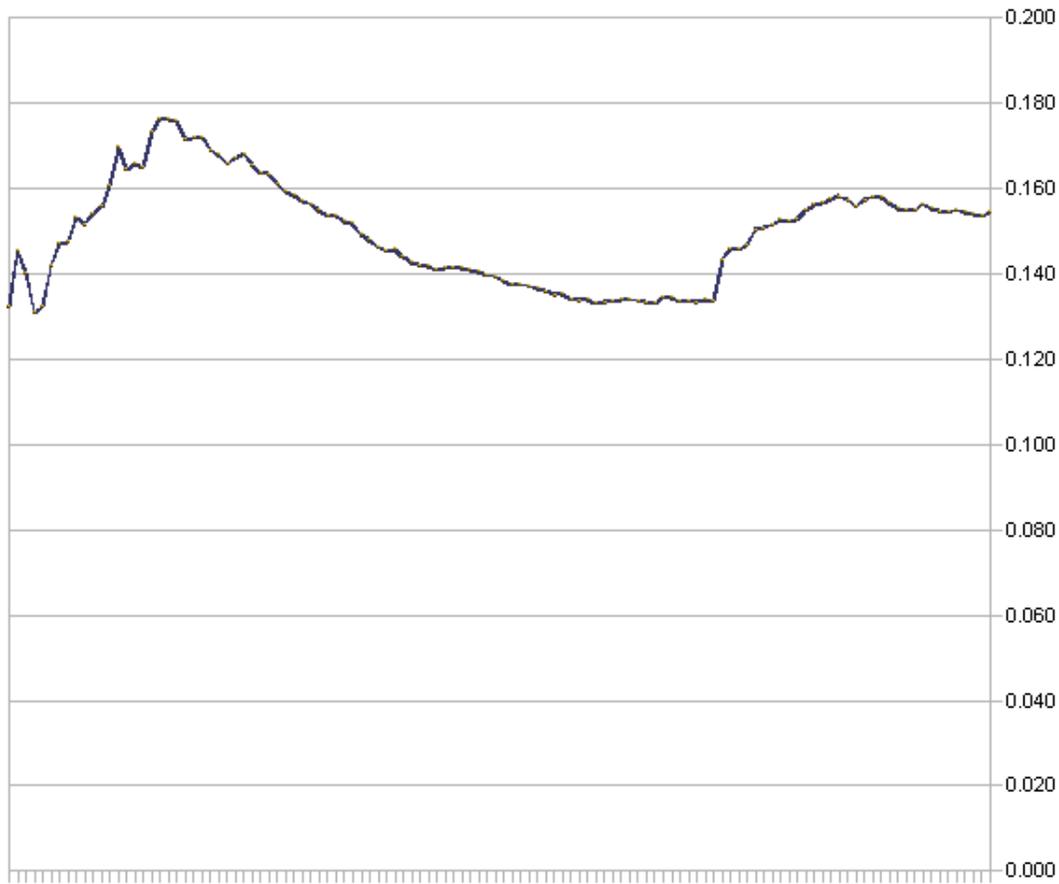,scale=0.7,angle=0}}
\caption{Estimated intra-day IBM volatility - May 26 to June 18, 2010 }
\label{fig_chart0}
\end{figure}



\end{Examp}






\section{European options: mispricing opportunities}


We use the resulting annualized volatility to compute the Black-Scholes prices of European options on the
stock. 
We then seek those instances when the computed prices differ the most from the 
market prices, and devise trading strategies to take advantage of the price
difference.

\medskip

We now use the volatility parameter estimated above to price European call 
options
using the Black-Scholes formula:
\begin{equation}
C_t=S_t\;\Phi(d_1)-Ke^{-r(T-t)}\;\Phi(d_2),                      \label{equation14}
\end{equation}
with
\begin{eqnarray*}
d_1&=&\frac{\log (S_t/K)+(r+\sigma_a^2/2)(T-t)}{\sigma_a\sqrt{T-t}},\\
d_2&=&d_1-\sigma_a\sqrt{T-t}.
\end{eqnarray*}

We devise a trading strategy to take advantage of the information differential
between our estimated prices and market prices. For simplicity we trade only in
European call options, and assume that at expiry there is a payment equal to
the payoff so that no actual trading occurs in the underlying stock (naked
trading).

Having assumed a constant volatility there is no volatility smile and no 
stochastic volatility\footnote{Alternative approaches like
stochastic volatility or econometric models (ARMA, GARCH etc) are not discussed
here.}, so we restrict our analysis to 
European call options 
whose strike prices are relatively close to the stock price at the beginning 
of the period (preferably in the money), and whose expiry dates are up to 
three months 
(the parameters can be re-estimated later in view of new data).
\begin{Examp}\label{examp2-0}
Consider the market prices for the European call options on IBM for the period
May 26, 2010 to June 18, 2010 with expiry dates June 18 and July 16,
and strike prices $K\in\{115, 120, 125, 130\}$ 
(the stock price on May 26 was 125.91).
We compare these market prices with the Black-Scholes prices calculated
using (\ref{equation14}). Here the inputs are the stock price, the volatility estimated in 
Example \ref{examp30}, and $r$ the value of the 1-month US Treasury 
bill yield for the previous day (online
\href{
http://www.treasury.gov/resource-center/data-chart-center/interest-rates/Pages/default.aspx
%http://www.treasury.gov/offices/domestic-finance/debt-management/interest-rate/yield_historical.shtml}
{Treasury data}\footnote
{http://www.treasury.gov/resource-center/data-chart-center/interest-rates/Pages/default.aspx}
\end{Examp}

Since  
the intra-day volatility of Example \ref{examp30} that we use in Black-Scholes formula
does not include the effect of the after-hours evolution,
we compensate by allowing our prices to differ by up to 10\% from the bid-ask spread. Thus, 
we trade when our estimated call price falls outside the interval (0.9$\times$bid-price, 1.1$\times$ask-price). 

There are two cases.
If our price is lower, then we short-sell the option at the bid price and wait for the first day when the 
estimated price is no longer lower to buy back the option at the then ask price. If it expires and the call 
option is exercised then we buy the stock in the market and deliver it.

If our price is higher, then we buy the option at ask price and wait for the first day when the price is no
longer higher to sell it at the then bid price. If it reaches expiry date, then we exercise it.

This trading strategy is summarized in Algorithm \ref{algo0} for $t$ between May 26, 2010 and
June 18, 2010 for European call options expiring at close June 18, 2010. The data is retrieved once a day,
except on expiration date when it is retrieved several times a day (this can be implemented as an algorithmic
trading strategy and deployed continuously without much effort, especially by those interested in technical
trading).

%
%
%
%
%
%
%
\begin{table}[htb]
\begin{center}
\begin{threeparttable}
\begin{tabular}{|c|c|c|c|c||c|c|c|c|}
\hline
 t&K&$\hat{C}(t)$ &(bid,ask)& trade&  t& (bid,ask) &trade& profit
\\
\hline
May 26&130&0.90&(\textcolor{blue}{1.16},{1.17})&sell&May 27&(0.96,\textcolor{red}{0.99})&buy&0.17
\\
May 28&130&0.57&(\textcolor{blue}{0.75},0.78)&sell&Jun 2&(0.67,\textcolor{red}{0.70})&buy&0.05
\\
Jun 7&125&1.92&(\textcolor{blue}{2.20},2.23)&sell&Jun 8&(1.21,\textcolor{red}{1.23})&buy&0.97
\\
Jun 7&130&0.29&(\textcolor{blue}{0.42},0.44)&sell&Jun 8&(0.15,\textcolor{red}{0.17})&buy&0.25
\\
Jun 8&120&3.69&(\textcolor{blue}{4.15},4.20)&sell&Jun 9&(4.60,\textcolor{red}{4.75})&buy&(0.6)
\\
Jun 9&125&1.09&(\textcolor{blue}{1.30},1.38)&sell&Jun 10&(3.05,\textcolor{red}{3.15})&buy&(1.77)
\\
Jun 17&130&1.09&(0.90,\textcolor{red}{0.94})&buy&Jun 19&(\textcolor{blue}{1.00},1.05)&sell&0.06
\\
Jun 18\tnote{a} &130&0.60&(0.48,\textcolor{red}{0.51})&buy&Jun 18\tnote{b} &(\textcolor{blue}{0.63},0.69)&sell&0.12
\\
Jun 18\tnote{c} &130&0.18&(\textcolor{blue}{0.21},0.25)&sell&Jun 18\tnote{d} &$S_t$=130.14&ex\tnote{e}&0.07
\\
\hline
\end{tabular}
\begin{tablenotes}[para]
\item [a] at 12:27pm 
\item  [b] at 13:36pm
\item [c] at 15:58pm
\item [d] at 16:00pm
\item [e] if exercised
\end{tablenotes}
\end{threeparttable}
\caption{Trading in the call option expiring June 18, 2010 \newline
\centerline{(left: open a position, right: close position)}}
\label{table10}
\end{center}
\end{table}
\begin{Remar}\label{algo}
This strategy results in an overall loss of 0.68 (see Table \ref{table10}). This is due mostly 
to one large loss induced by a large sudden move in the stock price
on June 10 (127.3 versus 123.9 the day before). That is because we use yesterday's
intra-day volatility to trade in today's world. 

Over a time horizon longer than a month the strategy can absorb
such shocks in the stock prices, provided they are sparse. Alternatively, one can
implement an additional stopping rule when the change in the stock price exceeds a pre-determined margin.

A similar behaviour is exhibited when applying the same trading strategy to
the European call option expiring at close July 16, 2010, but as the expiry date
is longer than a couple of months the limitations of the assumptions of the model become apparent.
\end{Remar}

\section{Conclusions}
We have used the method of moments to estimate the volatility of the stock price
and used this to identify arbitrage opportunities in the market of European
options. As a by-product we have derived the density and expectation of the
range of an arithmetic Brownian motion.

In comparison to the estimate of Yang and Zhang (2000), our volatility estimate takes
advantage of the actual range of the Brownian motion and perhaps does not
overestimate as much. It is most useful for short expiration dates and for strike prices
that are not far out. We believe it is an efficient alternative that
can be easily computed and has a practical implementation. These traits recommend it
to the attention of practioners in the field.

\newpage

\hspace*{0.6cm}{\bf REFERENCES}

\bigskip

\noindent
\textsc{ Billingsley, P.} (1968):
{\em Convergence of probability measures.} John Wiley, New York.

\noindent
\textsc{ Chan, L. and D. Lien} (2003):
Using high, low, open, and closing prices to estimate the effects of cash
settlements on futures prices.
{\em International Review of Financial Analysis\/} 12, 35--47.



\noindent
\textsc{ Fama, E.F.} (1965):
The behaviour of stock market prices.
{\em Journal of Business\/} 38, 34--105.

\noindent
\textsc{ Feller, W.} (1951):
The asymptotic distribution of the range of sums of independent random 
variables.
{\em Annals of Mathematical Statistics\/} 22, 427--432.



\noindent
\textsc{ Garman, M. and M. Klass } (1980):
On the estimation of security price volatilities from historical data.
{\em Journal of Business\/} 53(1), 67--78.


\noindent
\textsc{ Harrison, J.M.} (1985):
{\em Brownian motion and stochastic flow systems.}
Wiley, New York.

\noindent
\textsc{ Hull, J.C.} (2006):
{\em Options, futures and other derivatives.} 6th ed,
Prentice Hall, New Jersey.

\noindent
\textsc{ Karatzas, I. and S. E. Shreve} (1998):
{\em Brownian motion and stochastic calculus.}
Springer, New York.





\noindent
\textsc{ Kon\'{e}, F. J.} (1996):
{\em Estimation of the volatility of stocks using the high, low and 
closing prices.}
Ph.D. thesis, State University of New York at Stony Brook.

\noindent
\textsc{ L\'{e}vy, P. } (1948):
{\em Processus stochastique et mouvement brownien.} Gauthier-Villars, Paris. 

\noindent
\textsc{ Magdon-Ismail, M. and A. Atiya } (2000):
Volatility estimation using high, low and close data - a maximum likelihood
approach.
{\em Computational Finance\/}, June CF2000 Proceedings.

\noindent
\textsc{ Magdon-Ismail, M., A. Atiya, A. Pratap and Y. Abu-Mustafa } (2004):
On the maximum drawdown of a Brownian motion.
{\em Journal of Applied Probability\/}, 41(1), 147--161.


\noindent
\textsc{ Parkinson, M.} (1980):
The extreme value method for estimating the variance of the rate of return.
{\em Journal of Business} 53(1), 61--65.

\noindent
\textsc{ Rogers, L.C.G. and S. Satchell } (1991):
{ Estimating variance from high, low and closing prices}. 
{\em Annals of Applied Probability\/} 1(4), 504--512.

\noindent
\textsc{ Rogers, L.C.G., S. Satchell and Y. Yoon} (1994):
Estimating the volatility of stock prices: a comparison of methods that use
high and low prices.
{\em Applied Financial Economics\/} 4, 241--247.

\noindent
\textsc{ Rogers, L.C.G. and F. Zhou} (2008):
Estimating correlation from high, low, opening and closing prices.
{\em The Annals of Applied Probabilty\/} 18(2), 813--823.

\noindent
\textsc{ Sutrick K., J. Teall, A. Tucker and J. Wei} (1997):
{ The range of  Brownian motion processes: density functions and derivative
pricing applications}.
{\em The Journal of Financial Engineering} 6(1), 31--46.

\noindent
\textsc{ Yang D. and Q. Zhang} (2000):
{  Drift-independent volatility estimation based on high,
low, open, and close prices}.
{\em Journal of Business} 73, 477--491.





\IncMargin{1em}
\begin{algorithm}
 \SetKwFunction{BlackScholesCall}{BlackScholesCall}
 \SetKwFunction{profit}{profit}
 \SetKwInOut{Input}{input}
 \SetKwInOut{Output}{output}
\BlankLine
\Input{Parameters $t$, $\sigma_a$, $r$, $S_t$, bid(t) and ask(t)  (European call prices)}
\Output{Profit of trading strategy}
\BlankLine
\profit $\leftarrow$ 0 \tcp*{initialize}
T $\leftarrow$ June 18, 2010\tcp*{expiry date}
  compute $\sigma_a$\tcp*{Example \ref{examp30}}
  \For
  {
    \emph{K} $\leftarrow$ \emph{115} \KwTo \emph{130}
  }
  { 
     $\hat{C}(t)\leftarrow$ \BlackScholesCall{$t, T, K, \sigma_a, r, S_t$}\tcp*{(\ref{equation14})}
     \eIf
     {
       $\hat{C}<0.9\times$ \emph{bid(t)}
     }
     {
      \profit $\leftarrow$ \profit + bid(t)\tcp*{short-sell call}
      \While{$t<T$ \emph{and} $\hat{C}$\emph{(t)}$<$\emph{0.9}$\times$\emph{bid(t)}}{t $\leftarrow$ t+1
                                                   \tcp*{wait 1 day}
          compute $\sigma_a$\tcp*{Example \ref{examp30}}
        $\hat{C}(t)\leftarrow$ \BlackScholesCall{$t, T, K, \sigma_a, r, S_t$}\tcp{(\ref{equation14})}
         }
    \eIf{$t<T$}{\profit $\leftarrow$ \profit - ask(t) \tcp*{buy back call}}
      {\If{\emph{call is exercised}}{\profit $\leftarrow$ \profit - ($S_t$ -K)\;\tcp{buy stock and deliver for K}
         }}
      }
      {
           \If
          {  
             $\hat{C}>1.1\times$\emph{ask(t)}
           }
        {
          \profit $\leftarrow$ \profit - ask(t)\tcp*{buy call}
          \While{$t<T$ \emph{and} $\hat{C}$\emph{(t)}$>$\emph{1.1}$\times$\emph{ask(t)}}
         {
          t $\leftarrow$ t+1\tcp*{wait 1 day}
          compute $\sigma_a$\tcp*{Example \ref{examp30}}
           $\hat{C}(t)\leftarrow$ \BlackScholesCall{$t, T, K, \sigma_a, r, S_t$}\;
         }
    \eIf{$t<T$}{\profit $\leftarrow$ \profit + bid(t) \tcp*{sell call}}
         { \profit $\leftarrow$ \profit + ($S_t$ -K)\tcp*{exercise call}
         }}    
    } 
} 
\Return{\profit}\;
\BlankLine
\caption{Trading strategy for a mispricing opportunity found at time $t$ 
}\label{algo0}
\end{algorithm}\DecMargin{1em}
\end{document}